\documentclass[review]{elsarticle}

\usepackage{graphicx}

\usepackage{amssymb}
\usepackage{amsthm}
\usepackage{amsmath}
\usepackage{amsfonts}

\newtheorem{theorem}{Theorem}[section]
\newtheorem{proposition}[theorem]{Proposition}
\newtheorem{definition}[theorem]{Definition}
\newtheorem{corollary}[theorem]{Corollary}
\newtheorem{lemma}[theorem]{Lemma}

\usepackage{hyperref}
\usepackage{pgf,tikz}
\usepackage{mathrsfs}
\usetikzlibrary{arrows}
\usepackage{yhmath}


\journal{Computational Geometry}









\bibliographystyle{elsarticle-num}

\begin{document}

\begin{frontmatter}

\title{Recognition of the Spherical Laguerre Voronoi Diagram}

\author{Supanut Chaidee\corref{cor1}}
\ead{schaidee@meiji.ac.jp}
\cortext[cor1]{Corresponding author}

\author{Kokichi Sugihara}

\address{Graduate School of Advanced Mathematical Sciences, Meiji University\\
	4-21-1 Nakano, Nakano-ku, Tokyo 164-8525, Japan}

\begin{abstract}
In this paper, we construct an algorithm for determining whether a given tessellation on a sphere is a spherical Laguerre Voronoi diagram or not. For spherical Laguerre tessellations, not only the locations of the Voronoi generators, but also their weights are required to recover. However, unlike the ordinary spherical Voronoi diagram, the generator set is not unique, which makes the problem difficult. To solve the problem, we use the property that a tessellation is a spherical Laguerre Voronoi diagram if and only if there is a polyhedron whose central projection coincides with the tessellation. We determine the degrees of freedom for the polyhedron, and then construct an algorithm for recognizing Laguerre tessellations.
\end{abstract}

\begin{keyword}
\texttt{spherical Voronoi diagram, Laguerre distance, recognition problem, projective transformation}
\MSC[2010] 51N15 \sep 68Q25 \sep 68U05  
\end{keyword}

\end{frontmatter}


\section{Introduction}

The Voronoi diagram is one of the most useful representations for tessellations in computational geometry. The fundamental concepts, generalizations, and applications have been studied widely, e.g., \cite{BAurenhammer,BOkabe}. Here we are concerned with weighted Voronoi diagrams called Laguerre Voronoi diagrams, which are also known as a power diagrams. This class of Voronoi diagrams are important because the boundaries are linear instead of curved.

The Laguerre Voronoi diagram was introduced by \cite{Imi,Aurenhammer}. Briefly, for a set $S$ of $n$ spheres $s_i=(\textbf{x}_i, r_i)$ in $\mathbb{R}^d$, where $\textbf{x}_i$ is the position of the center and $r_i$ is the radius, which is equivalent to the generator weight, the Laguerre distance of $\textbf{x}\in \mathbb{R}^d$ from $s_i$ is 
\begin{equation}\label{LagDisOr}
d_{\text{L}}(\textbf{x},s_i)=\|\textbf{x}-\textbf{x}_i\|^2-r_i^2.
\end{equation}
In \cite{Aurenhammer2,Aurenhammer3} the correspondence between the Laguerre Voronoi diagram in $\mathbb{R}^d$ and a polyhedron in $\mathbb{R}^{d+1}$ was investigated. Then using the correspondence, an algorithm for constructing the Laguerre Voronoi diagram was presented in \cite{Imi,Aurenhammer}, and a robust version was presented in \cite{Sugihara1}. The Laguerre Voronoi diagram can be extended for tessellations on a sphere as defined in \cite{Sugihara2}. Applications of the spherical Laguerre Voronoi diagram were also studied in \cite{Mach,Chaidee3} for modeling objects with spheres.

It is sometimes necessary to consider the inverse of the above problem: the determination of whether or not a given tessellation is the Voronoi diagram. If it is, we can recover the corresponding generators. This problem is known as the Voronoi recognition problem, and was studied in \cite{BLoeb,JAsh,JAurenhammer,JHartvigsen,JSchoenberg,JAloupis}. On the other hand, if the tessellation cannot be represented by a Voronoi diagram,  we approximate it with the best fitting Voronoi diagram; this is called the Voronoi approximation problem. Examples of the Voronoi approximation problems are found in \cite{JHonda1,JHonda2,JSuzuki,JEvan,Chaidee1}, for unweighted Voronoi diagrams. These approximations have a number of useful applications, for instance, if we have a tessellation that is found in the real world, and we can approximate the tessellation with a Voronoi diagram, we can use this Voronoi diagram as a model for understanding the pattern's formation.

The inverse problems were widely studied for the case of ordinary Voronoi diagrams, whereas relatively little work has been done for Laguerre diagrams. Duan et al. \cite{Duan} considered the inverse problem for the planar Laguerre Voronoi diagram, and presented an algorithm for recovering the generators and their weights from a tessellation. Lautensack \cite{Lau} and Lyckegaard et al. \cite{Lyckegaard} used Laguerre Voronoi diagrams to study the relation between structures and their physical properties. Recently, Spettl et al. \cite{Duan2} fitted Laguerre Voronoi diagrams to tomographic image data. In the case of the spherical Laguerre Voronoi diagram, Chaidee and Sugihara \cite{Chaidee3} provided a framework for approximating the weights of spherical Laguerre Voronoi diagrams when the location of the Voronoi generators are known.

In this paper, we focus on the spherical Laguerre Voronoi diagram recognition problem. Our goal is to judge whether or not a given spherical tessellation is a spherical Laguerre Voronoi diagram. Remark that in the case of the ordinary Voronoi diagram recognition problem, the generator positions are unique whereas there are many sets of the generating circles generating the same Laguerre Voronoi diagram. With this reason, the Laguerre Voronoi diagram recognition problem is more difficult than the ordinary Voronoi diagram recognition problem. By the nonuniqueness property of generating circles, for each spherical Laguerre Voronoi diagram, there is a class of polyhedra whose projections coincide with the spherical Laguerre Voronoi diagram. Therefore, for a given tessellation, if we find these polyhedra, we can judge it is a  spherical Laguerre Voronoi diagram, and can recover the generators and their weights. Otherwise, we judge that the given tessellation is not a spherical Laguerre Voronoi diagram.

This paper is organized as follows. In Section 2, we provide the definitions and theorems which are necessary for our study. The recognition problem is also mathematically defined in this section. In Section 3, we focus on the properties of projective transformations which transform a polyhedron associated with a spherical Laguerre Voronoi diagram to other polyhedra in the same class. In Section 4, we give algorithms for constructing a polyhedron from a given tessellation and for judging whether a given tessellation is a spherical Laguerre Voronoi diagram. Finally, we summarize our research in Section 5, and give suggestions for future work.

\section{Preliminaries}

In this section, we provide the necessary definitions and theories on the spherical Laguerre Voronoi diagram. Then we state the problem considered in this paper.

We first present some fundamental definitions from spherical geometry.

Let $U$ be a unit sphere whose center is located at the origin $O(0, 0, 0)$ of a Cartesian coordinate system. 

For two distinct points $p, p_i \in U$ with position vectors $\textbf{x}, \textbf{x}_i$, respectively, the \textit{geodesic arc} $\wideparen{e}_{p,p_i}$ of $p, p_i$ is defined as the shortest arc between $p$ and $p_i$ of the great circle passing through $p$ and $p_i$. The \textit{geodesic arc length}, also called the \textit{geodesic distance}, is given as follows,
\begin{equation}
\tilde{d}(p, p_i) = \arccos(\textbf{x}^{\text{T}}\textbf{x}_i)\leq \pi. \label{GeoDist}
\end{equation}

In this paper, we focus on spherical tessellations. We define the spherical polygon as follows.

Let $(q_1, ..., q_m)$ be a sequence of distinct vertices on the sphere such that the geodesic arcs $\wideparen{e}_{q_i,q_i+1}$ ($i=1, ..., m$; $q_{m+1}$ is read as $q_1$) do not intersect except at the vertices. The left area enclosed by the collection of these geodesic arcs is called the \textit{spherical polygon} $Q(q_1, ..., q_m)$. $Q(q_1, ..., q_m)$ is abbreviated as $Q$ hereafter.

The spherical polygon $Q$ is said to be \textit{convex} if and only if no geodesic arc joining the two points in $Q$ goes outside of $Q$.

$\mathcal{T}$ is said to be a \textit{spherical tessellation} if $\mathcal{T}$ is a decomposition of $U$ into a countable number of spherical polygons whose interiors are pairwise disjoint. $\mathcal{T}$ is said to be \textit{convex} if all of these spherical polygons are convex.

Next, we consider a special case of a spherical tessellation. Let $G=\{p_1, ..., p_n\}$ be a set of points on the sphere $U$. Assignment of each point on $U$ to the nearest point in $G$ with respect to the geodesic distance forms a tessellation, which is called the spherical Voronoi diagram on the sphere $U$. Algorithms for constructing the spherical Voronoi diagram were provided in \cite{Renka,Sugihara2}.

We can generalize the spherical Voronoi diagram to the spherical Laguerre Voronoi diagram. The necessary definitions and theorems were originally presented in \cite{Sugihara1,Sugihara2}. We briefly introduce those definitions and theorems as follows.

For a given sphere $U$ and point $p_i \in G$, the spherical circle $\tilde{c}_i$ centered at $p_i$ is defined by
\begin{equation}\label{SC}
\tilde{c}_i=\{p\in U|\tilde{d}(p_i, p)=r_i\},
\end{equation}
where $0\leq r_i <\pi/2$. $r_i$ is called the radius of the spherical circle $\tilde{c}_i$.

Following \cite{Sugihara2}, we define the Laguerre proximity, the distance measured from an arbitrary point $p$ on the sphere $U$ to a spherical circle  $\tilde{c}_i$, as follows,
\begin{equation}\label{RealCircle}
\tilde{d}_L(p, \tilde{c}_i)=\frac{\cos \left(\tilde{d}(p_i, p)\right)}{\cos \left(r_i\right)}.
\end{equation}
Let $\tilde{c}_i$ and $\tilde{c}_j$ be two circles. The Laguerre bisector of $\tilde{c}_i$ and $\tilde{c}_j$ is defined by
\begin{equation}
B_L(\tilde{c}_i, \tilde{c}_j)=\{p\in U|\tilde{d}_L(p, \tilde{c}_i)=\tilde{d}_L(p, \tilde{c}_j)\}. \label{LagBi}
\end{equation}
For a set $\tilde{G}=\{\tilde{c}_1, ..., \tilde{c}_n\}$ of $n$ spherical circles on $U$, we define the region
\begin{equation}
\tilde{R}(\tilde{G}, \tilde{c}_i)=\{p\in U|\tilde{d}_L(p, \tilde{c}_i)<\tilde{d}_L(p, \tilde{c}_j), j\neq i\}.
\end{equation}
The regions $\tilde{R}(\tilde{G}, \tilde{c}_1), ..., \tilde{R}(\tilde{G}, \tilde{c}_n)$, together with their boundaries constitute a tessellation, which is called the spherical Laguerre Voronoi diagram of $U$. Figure \ref{SLVD} shows an example of the spherical Laguerre Voronoi diagram. This is a stereo diagram. If the right diagram is viewed with the left eye and the left diagram with the right eye, the sphere can be seen; the upper pair represents the front hemisphere and the lower pair represents the rear hemisphere.

\begin{figure}[h]
	\begin{center}
		\includegraphics[scale=0.7]{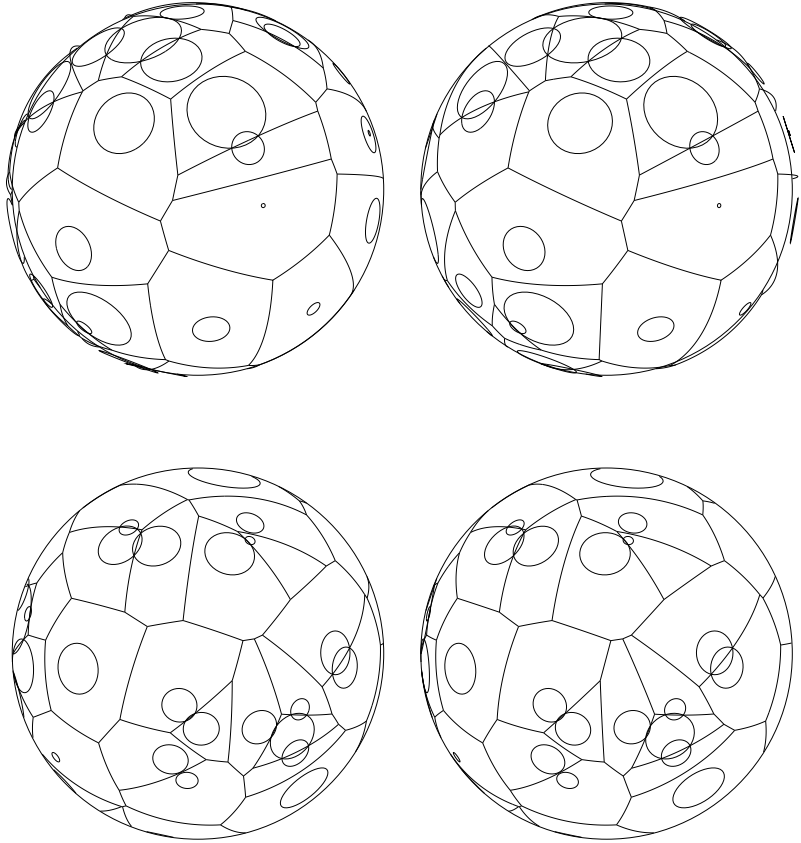}
		\caption{Stereographic images of a spherical Laguerre Voronoi diagram.} \label{SLVD}
	\end{center}
\end{figure}

In \cite{Sugihara1,Sugihara2}, an algorithm for constructing the spherical Laguerre Voronoi diagram and its dual, i.e., the spherical Laguerre Delaunay diagram, was proposed. For a spherical circle  $\tilde{c}_i$ of $U$, let $\pi(\tilde{c}_i)$ be the plane passing through $\tilde{c}_i$, and let $H(\tilde{c}_i)$ be the halfspace bounded by $\pi(\tilde{c}_i)$ and containing $O$. Let $\ell_{i,j}$ be the line of intersection of $\pi(\tilde{c}_i)$ and $\pi(\tilde{c}_j)$.

The Laguerre bisector shown in (\ref{LagBi}) is characterized by the following theorems.

\begin{theorem}[\cite{Sugihara2}]\label{SugiThm1}
	The Laguerre bisector $B_L(\tilde{c}_i,\tilde{c}_j)$ is a great circle, and it crosses the geodesic arc connecting the two centers $p_i$ and $p_j$ at right angles.
\end{theorem}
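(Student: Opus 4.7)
The plan is to translate the bisector condition into a linear equation in the position vector of $p$ and then read off both claims directly from that equation.

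First, I would substitute the identity $\cos(\tilde d(p_i,p)) = \mathbf{x}^{\mathrm T}\mathbf{x}_i$ (from (\ref{GeoDist})) into the proximity formula (\ref{RealCircle}) to get $\tilde d_L(p,\tilde c_i) = \mathbf{x}^{\mathrm T}\mathbf{x}_i / \cos r_i$. The defining equation $\tilde d_L(p,\tilde c_i)=\tilde d_L(p,\tilde c_j)$ then rearranges to the single linear equation $\mathbf{x}^{\mathrm T}\mathbf{n}_L = 0$, where $\mathbf{n}_L = \mathbf{x}_i/\cos r_i - \mathbf{x}_j/\cos r_j$. Since $\mathbf{n}_L$ is nonzero whenever $p_i\ne p_j$ (using $r_i,r_j<\pi/2$), this cuts out a plane through the origin, and its intersection with the unit sphere $U$ is by definition a great circle. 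This handles the first assertion of the theorem.

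For the right-angle claim, I would use the standard fact that two great circles on $U$ meet orthogonally iff the normal vectors to their defining planes are orthogonal. One quick way to see this: the tangent to the great circle with unit normal $\mathbf{m}$ at a point with position vector $\mathbf{x}^*$ is $\mathbf{m}\times\mathbf{x}^*$, so the inner product of two such tangents at a common point $\mathbf{x}^*$ simplifies via the Binet--Cauchy identity to $(\mathbf{m}_1\cdot\mathbf{m}_2)\|\mathbf{x}^*\|^2 - (\mathbf{m}_1\cdot\mathbf{x}^*)(\mathbf{m}_2\cdot\mathbf{x}^*) = \mathbf{m}_1\cdot\mathbf{m}_2$ (the last two inner products vanish because $\mathbf{x}^*$ lies in both planes).

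The great circle containing the geodesic arc $\wideparen{e}_{p_i,p_j}$ has normal $\mathbf{x}_i\times\mathbf{x}_j$, so the perpendicularity reduces to verifying $(\mathbf{x}_i\times\mathbf{x}_j)\cdot\mathbf{n}_L = 0$, which is immediate from the scalar triple product identities $(\mathbf{x}_i\times\mathbf{x}_j)\cdot\mathbf{x}_i = (\mathbf{x}_i\times\mathbf{x}_j)\cdot\mathbf{x}_j = 0$. I would finish by remarking that the intersection point lies on the arc itself (rather than on its antipodal complement) by checking that $\mathbf{x}_i^{\mathrm T}\mathbf{n}_L$ and $\mathbf{x}_j^{\mathrm T}\mathbf{n}_L$ have opposite signs under the assumption $r_i,r_j<\pi/2$. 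The whole argument is essentially a computation; the only mildly delicate step is the general lemma relating orthogonality of great circles to orthogonality of the corresponding plane normals, which I would justify in one line as above.
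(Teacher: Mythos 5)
The paper itself offers no proof of this statement --- it is imported verbatim from \cite{Sugihara2} --- so there is nothing internal to compare against; I can only assess your argument on its own terms. The core of it is correct and is the natural route: writing $\tilde d_L(p,\tilde c_i)=\mathbf{x}^{\mathrm T}\mathbf{x}_i/\cos r_i$ reduces the bisector to the plane $\mathbf{x}^{\mathrm T}\mathbf{n}_L=0$ through $O$ with $\mathbf{n}_L=\mathbf{x}_i/\cos r_i-\mathbf{x}_j/\cos r_j$ (nonzero when $p_i\neq p_j$, since equality of the two scaled vectors forces equal norms, hence $\cos r_i=\cos r_j$ and $\mathbf{x}_i=\mathbf{x}_j$), and your Binet--Cauchy lemma correctly reduces orthogonality of great circles to orthogonality of plane normals, which the triple-product computation settles.

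The flaw is the closing remark. It is \emph{not} true that $\mathbf{x}_i^{\mathrm T}\mathbf{n}_L$ and $\mathbf{x}_j^{\mathrm T}\mathbf{n}_L$ always have opposite signs under $r_i,r_j<\pi/2$. Take $r_i=0$, $r_j=\pi/3$ (so $1/\cos r_i=1$, $1/\cos r_j=2$) and $\mathbf{x}_i^{\mathrm T}\mathbf{x}_j=0.9$; then $\mathbf{x}_i^{\mathrm T}\mathbf{n}_L=1-1.8=-0.8$ and $\mathbf{x}_j^{\mathrm T}\mathbf{n}_L=0.9-2=-1.1$, both negative. Geometrically this is the spherical analogue of the radical axis of two disks failing to separate the centers when one center lies deep inside the other disk, and in that configuration the bisector great circle meets the great circle through $p_i,p_j$ only at two antipodal points lying outside the minor arc $\wideparen{e}_{p_i,p_j}$. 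So your computation actually proves the correct and robust statement --- the bisector is a great circle orthogonal to the great circle through the two centers --- but the sign check you propose would fail, and the literal claim that the bisector crosses the shortest arc between the centers cannot be salvaged in general. You should either drop that final remark or restate the perpendicularity claim in terms of the full great circle through $p_i$ and $p_j$.
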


\begin{theorem}[\cite{Sugihara2}]\label{SugiThm2}
	The bisector $B_L(\tilde{c}_i,\tilde{c}_j)$ is the intersection of $U$ and the plane containing $\ell_{i,j}$ and $O$.
\end{theorem}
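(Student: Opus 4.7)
The plan is to translate the defining equation $\tilde{d}_L(p,\tilde{c}_i)=\tilde{d}_L(p,\tilde{c}_j)$ into a linear equation in the position vector $\mathbf{x}$ of $p$, and then to recognize that equation as defining the plane through $O$ that contains $\ell_{i,j}$.

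First I would rewrite the Laguerre proximity algebraically. By (\ref{GeoDist}) we have $\cos\tilde{d}(p_i,p)=\mathbf{x}^{\mathrm{T}}\mathbf{x}_i$, so (\ref{RealCircle}) becomes $\tilde{d}_L(p,\tilde{c}_i)=\mathbf{x}^{\mathrm{T}}\mathbf{x}_i/\cos r_i$. Setting the two proximities equal and clearing denominators gives the equivalent linear condition
\[
\mathbf{x}^{\mathrm{T}}\mathbf{n}_{ij}=0,\qquad \mathbf{n}_{ij}:=\frac{\mathbf{x}_i}{\cos r_i}-\frac{\mathbf{x}_j}{\cos r_j}.
\]
Hence $B_L(\tilde{c}_i,\tilde{c}_j)$ is exactly the intersection of $U$ with the plane $P$ through the origin whose normal vector is $\mathbf{n}_{ij}$.

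Next I would identify $P$ with the plane spanned by $O$ and $\ell_{i,j}$. The supporting plane of $\tilde{c}_i$ is $\pi(\tilde{c}_i)=\{\mathbf{x}:\mathbf{x}^{\mathrm{T}}\mathbf{x}_i=\cos r_i\}$, and similarly for $\tilde{c}_j$, so every $\mathbf{x}\in\ell_{i,j}=\pi(\tilde{c}_i)\cap\pi(\tilde{c}_j)$ satisfies $\mathbf{x}^{\mathrm{T}}\mathbf{x}_i/\cos r_i=1=\mathbf{x}^{\mathrm{T}}\mathbf{x}_j/\cos r_j$, which subtracts to $\mathbf{x}^{\mathrm{T}}\mathbf{n}_{ij}=0$. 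Therefore $\ell_{i,j}\subset P$, and together with $O\in P$ this pins down $P$ uniquely as the plane containing $\ell_{i,j}$ and $O$. Intersecting $P$ with $U$ yields the stated characterization of the bisector.

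The only subtleties are the non-degeneracy conditions: $\mathbf{n}_{ij}\neq\mathbf{0}$, so that $P$ is a genuine plane, and $O\notin\ell_{i,j}$, so that $\{O\}\cup\ell_{i,j}$ truly determines a unique plane. Both follow from the hypothesis $0\leq r_i<\pi/2$, which forces $\cos r_i>0$, hence $O\notin\pi(\tilde{c}_i)$, and which prevents $\mathbf{x}_i/\cos r_i=\mathbf{x}_j/\cos r_j$ unless the two circles coincide. Beyond verifying these sanity checks, the argument is a direct computation, so I do not foresee any real obstacle.
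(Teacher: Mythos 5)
Your computation is correct and complete: reducing $\tilde{d}_L(p,\tilde{c}_i)=\tilde{d}_L(p,\tilde{c}_j)$ to the homogeneous linear equation $\mathbf{x}^{\mathrm{T}}\mathbf{n}_{ij}=0$ and then checking that $\ell_{i,j}=\pi(\tilde{c}_i)\cap\pi(\tilde{c}_j)$ satisfies the same equation is exactly the standard argument for this fact. Note, however, that the paper itself gives no proof of this theorem --- it is imported verbatim from the cited reference \cite{Sugihara2} --- so there is no in-paper argument to compare against; your derivation simply supplies the proof that the paper omits. The only caveat worth recording is that the statement implicitly assumes $\ell_{i,j}$ exists, i.e.\ that $\pi(\tilde{c}_i)$ and $\pi(\tilde{c}_j)$ are not parallel; your non-degeneracy remarks cover $\mathbf{n}_{ij}\neq\mathbf{0}$ and $O\notin\ell_{i,j}$, which is all that is needed once that assumption is in place.
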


For a set $\tilde{G}$ of spherical circles, the spherical Laguerre Voronoi diagram is constructed by the following process. For circles $\tilde{c}_i$, we construct the planes $\pi(\tilde{c}_i)$ and the half spaces $H(\tilde{c}_i)$, and the intersection of all halfspaces. We finally project the edge of the resulting polyhedron onto $U$, with the center of the projection at $O$. Then we have the spherical Laguerre Voronoi diagram.

Throughout this paper, the term \textit{spherical polygon} is abbreviated as \textit{polygon}. Let $\mathcal{T}$ be the given convex spherical tessellation on the unit sphere $U$ having $n$ polygons. We assume that all vertices of $\mathcal{T}$ are of degree 3.

The main concern of this paper is to judge whether or not a given tessellation $\mathcal{T}$ is a spherical Laguerre Voronoi diagram.

\section{Polyhedron Transformation}

From the definition and theorems of the spherical Laguerre Voronoi diagram in the previous section, the following proposition state the correspondence between the spherical Laguerre Voronoi diagram and a polyhedron.
\begin{proposition}\label{twoWay}
	$\mathcal{L}$ is a spherical Laguerre Voronoi diagram if and only if there is a convex polyhedron $\mathcal{P}$ containing the center of the sphere whose central projection coincides with $\mathcal{L}$.
\end{proposition}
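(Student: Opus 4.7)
The plan is to exploit the halfspace-intersection construction of spherical Laguerre Voronoi diagrams recalled at the end of Section 2, verifying each direction by applying that construction once forward and once in reverse, with Theorems \ref{SugiThm1} and \ref{SugiThm2} supplying the link between edges of the polyhedron and Laguerre bisectors. For the ``if'' direction, I would suppose $\mathcal{L}$ is generated by $\tilde G = \{\tilde c_1, \dots, \tilde c_n\}$ and set $\mathcal{P} := \bigcap_{i=1}^{n} H(\tilde c_i)$. Each $H(\tilde c_i)$ is a closed halfspace containing $O$, so $\mathcal{P}$ is a convex polyhedron containing $O$. By Theorem \ref{SugiThm2}, an edge of $\mathcal{P}$ lying on $\ell_{i,j}$ projects centrally from $O$ onto the bisector $B_L(\tilde c_i,\tilde c_j)$, and consequently the face of $\mathcal{P}$ on $\pi(\tilde c_i)$ projects onto $\tilde R(\tilde G, \tilde c_i)$; hence the central projection of $\mathcal{P}$ coincides with $\mathcal{L}$.

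For the converse, suppose $\mathcal{P}$ is a convex polyhedron containing $O$ whose central projection is $\mathcal{L}$, and let $\pi_1,\dots,\pi_n$ be its face planes. Since $O$ lies in the interior of $\mathcal{P}$, each $\pi_i$ is at positive distance $d_i$ from $O$. Central projection from $O$ is invariant under positive homothety about $O$, so rescaling $\mathcal{P}$ by any factor $\lambda \le 1/\max_i d_i$ allows us to assume $d_i \in (0,1]$ for every $i$. Letting $\mathbf{n}_i$ be the unit normal of $\pi_i$ pointing away from $O$, I would define $p_i := \mathbf{n}_i \in U$ and $r_i := \arccos d_i \in [0,\pi/2)$, and let $\tilde c_i$ be the spherical circle of center $p_i$ and radius $r_i$. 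Then $\pi_i = \pi(\tilde c_i)$ and $H(\tilde c_i)$ is the halfspace of $\pi_i$ containing $O$, so $\mathcal{P} = \bigcap_i H(\tilde c_i)$. Applying the forward direction to $\tilde G := \{\tilde c_1,\dots,\tilde c_n\}$ identifies $\mathcal{L}$ with the spherical Laguerre Voronoi diagram of $\tilde G$.

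The step that requires the most care is the rescaling in the converse direction: an arbitrary convex polyhedron containing $O$ may have face planes at distance greater than $1$ from $O$, which would produce $r_i$ outside the admissible range $[0,\pi/2)$ dictated by~(\ref{SC}). The homothety about $O$ repairs this while preserving the central projection, after which the plane-to-circle correspondence makes the two directions mirror images of the single halfspace-intersection construction.
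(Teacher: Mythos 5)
Your proposal is correct and follows essentially the same route as the paper: both directions rest on the halfspace-intersection construction and Theorem \ref{SugiThm2}, and your homothety argument is just a more explicit version of the paper's ``shrink the polyhedron so that every face plane meets $U$'' step, made precise via the correspondence $d_i \mapsto r_i = \arccos d_i$. The only slip is terminological -- you label the direction starting from a given generator set $\tilde G$ as the ``if'' direction, whereas it is the ``only if'' (necessity) half of the equivalence -- but both implications are in fact proved.
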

\begin{proof}
	We firstly prove the necessity of the condition. Let $\mathcal{L}$ be a spherical Laguerre Voronoi diagram. Hence, the generating circles exist. Therefore, we get the intersection of halfspaces with boundary planes passing through the spherical circles including the center of the sphere. The projection of this polyhedron onto $U$ coincides with $\mathcal{L}$ because of Theorem \ref{SugiThm2}.
	
	Inversely, let $\mathcal{L}$ be a given spherical tessellation and $\mathcal{P}$ be a convex polyhedron containing the center of the sphere such that its central projection coincides with $\mathcal{L}$. If there exists a plane $P_i$ such that $P_i$ does not intersect $U$, shrink the polyhedron $\mathcal{P}$ so that $P_i$ intersects $U$ for all $i$. Let $\tilde{c}_i$ be the intersection of the plane containing the face $P_i$ with $U$ for all $i$. Then, by Theorem \ref{SugiThm2}, $\mathcal{L}$ is the spherical Laguerre Voronoi diagram for $\tilde{c}_i$'s as the generators.
	
\end{proof}

Therefore, for each spherical Laguerre Voronoi diagram $\mathcal{L}$, there is a class of polyhedra whose central projections coincide with $\mathcal{L}$. To solve the spherical Laguerre Voronoi recognition problem, we will construct an algorithm to find those polyhedra.

To find these polyhedra, we study the transformation of the polyhedron that preserves the projection onto the sphere $U$. Let $P^3(\mathbb{R})$ be the three-dimensional projective space. The following properties are the requirements for the transformation.

\begin{definition}(Projection Preservation Property)\label{PPP}
	Let $f$ be a transformation from $P^3(\mathbb{R})$ to $P^3(\mathbb{R})$. $f$ is said to be the \textit{projection preserving mapping with respect to the origin $O$} if $f$ satisfies the following properties:
	\begin{enumerate}
		\item $f(O)=O$;
		\item For any point $v\in P^3(\mathbb{R})$, $v$ and $f(v)$ are on the same line passing through $O$.
	\end{enumerate}
\end{definition}

Let $\textbf{v}_a=(t_a, x_a, y_a, z_a) \in P^3(\mathbb{R})$ be a homogeneous coordinate representation of a vertex of the polyhedron $\mathcal{P}$ in the projective space. We define a map $f:P^3(\mathbb{R})\rightarrow P^3(\mathbb{R})$ by
\begin{equation}\label{map}f(\textbf{v}_a)=\begin{pmatrix}
\alpha & \beta & \gamma & \delta \\ 
0 & \eta & 0 & 0 \\ 
0 & 0 & \eta & 0 \\ 
0 & 0 & 0 & \eta
\end{pmatrix}\textbf{v}_a \end{equation}
where $\alpha, \beta, \gamma, \delta, \eta \in \mathbb{R}$ and $\alpha\neq 0, \eta\neq 0$.

\begin{theorem}\label{transMapThm}
	The mapping $f$ defined by equation (\ref{map}) is a projection preserving mapping.
\end{theorem}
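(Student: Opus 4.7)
My plan is to verify the two clauses of Definition \ref{PPP} by direct computation on homogeneous coordinates, since the matrix in (\ref{map}) is very structured: its lower $3\times 3$ block is the scalar matrix $\eta I$, and only the top row contains the parameters $\beta, \gamma, \delta$ that mix spatial with time coordinates.

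First I would check condition (1). The origin $O$ of the Cartesian frame corresponds to the homogeneous representative $\mathbf{v}_O=(1,0,0,0)^{\mathrm{T}}$. Applying the matrix in (\ref{map}) to $\mathbf{v}_O$ yields $(\alpha,0,0,0)^{\mathrm{T}}$, which, since $\alpha\neq 0$, represents the same projective point as $\mathbf{v}_O$. Thus $f(O)=O$.

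Next I would check condition (2). For an arbitrary $\mathbf{v}_a=(t_a,x_a,y_a,z_a)^{\mathrm{T}}$, multiplication gives
\begin{equation*}
f(\mathbf{v}_a)=\bigl(\alpha t_a+\beta x_a+\gamma y_a+\delta z_a,\; \eta x_a,\; \eta y_a,\; \eta z_a\bigr)^{\mathrm{T}}.
\end{equation*}
The key observation is that the spatial part $(\eta x_a,\eta y_a,\eta z_a)$ is a nonzero scalar multiple (by $\eta\neq 0$) of the spatial part $(x_a,y_a,z_a)$ of $\mathbf{v}_a$. Geometrically, a point in $P^3(\mathbb{R})$ and the origin $O$ determine a line whose other points are the homogeneous vectors of the form $\lambda(1,0,0,0)+\mu(t_a,x_a,y_a,z_a)=(\lambda+\mu t_a,\mu x_a,\mu y_a,\mu z_a)$; equivalently, two projective points are collinear with $O$ iff their last three homogeneous entries are proportional. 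Taking $\mu=\eta$ and $\lambda=\alpha t_a+\beta x_a+\gamma y_a+\delta z_a-\eta t_a$ exhibits $f(\mathbf{v}_a)$ as exactly such a combination, so $\mathbf{v}_a$ and $f(\mathbf{v}_a)$ lie on a common line through $O$.

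Both properties are thus verified, which completes the proof. There is no real obstacle here: the claim is essentially a restatement of the block structure of the matrix (scalar $\eta I$ on the spatial rows, arbitrary top row), and the only small care needed is to use the projective equivalence $(\alpha,0,0,0)\sim(1,0,0,0)$ and the nonvanishing of $\eta$ to interpret the image correctly in $P^3(\mathbb{R})$.
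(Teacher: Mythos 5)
Your proof is correct and follows essentially the same route as the paper: a direct matrix computation showing that $f$ fixes $O$ and scales the spatial homogeneous coordinates by the nonzero factor $\eta$, so the image stays on the line through $O$. If anything, your purely homogeneous formulation of collinearity with $O$ is slightly more careful than the paper's, which divides by $\varLambda$ and $t_a$ without remarking on the degenerate cases where these vanish.
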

\begin{proof}
	
	We now prove that the map $f$ satisfies the conditions in Definition \ref{PPP}. It is easy to verify that $f(O)=O$. It remains to be shown that $f(\textbf{v}_a)=\textbf{v}_a'$, where $\textbf{v}_a'$ lies on the line passing through the origin $L_a$ of $\textbf{v}_a$.
	
	For $\textbf{v}_a=(t_a, x_a, y_a, z_a) \in P^3(\mathbb{R})$, we have
	\begin{equation}
	f(\textbf{v}_a)=\begin{pmatrix}
	\alpha & \beta & \gamma & \delta \\ 
	0 & \eta & 0 & 0 \\ 
	0 & 0 & \eta & 0 \\ 
	0 & 0 & 0 & \eta
	\end{pmatrix}\begin{pmatrix}
	t_a \\ 
	x_a \\ 
	y_a \\ 
	z_a
	\end{pmatrix} = \begin{pmatrix}
	\varLambda \\ 
	\eta x_a \\ 
	\eta y_a \\ 
	\eta z_a
	\end{pmatrix} \label{transMap}
	\end{equation}
	where $\varLambda=\alpha t_a + \beta x_a + \gamma y_a + \delta z_a$.
	
	Note that for the transformed point $\textbf{v}_a'=(\varLambda, \eta x_a, \eta y_a, \eta z_a)$ in the projective space $P^3(\mathbb{R})$, 
	$$\left(\frac{\eta x_a}{\varLambda}, \frac{\eta y_a}{\varLambda}, \frac{\eta z_a}{\varLambda}\right)= \frac{\eta t_a}{\varLambda}\left(\frac{x_a}{t_a}, \frac{y_a}{t_a}, \frac{z_a}{t_a}\right)$$
	is a point in the space $\mathbb{R}^3$ which implies that $\textbf{v}_a'$ lies on the same line as $\textbf{v}_a$, which concludes the proof.
\end{proof}

Since the transformation is a projective map, it preserves the planarity of faces of a polyhedron. Also, by the reason that $f$ satisfies the projection preservation property, the transformed point $f(v)$ is on the same line as $v$ passing through $O$. However, the transformation as defined in (\ref{map}) does not guarantee the convexity of the projected polyhedron because some vertices may be mapped to the other side of the origin. Note that if $\alpha=\eta=1$ and $\beta=\gamma=\delta=0$, the map $f$ is an identity map. The set of the transformations of the form (\ref{map}) form a continuous group, and hence for each convex polyhedron $\mathcal{P}$, there exists a positive constant $\epsilon$ such that for any $-\epsilon \leq \alpha-1, \eta-1, \beta, \gamma, \delta \leq \epsilon$, the transformation (\ref{map}) maps $\mathcal{P}$ to a convex polyhedron.

Note that the transformation is uniquely determined if we fix the five parameters $\alpha, \beta, \gamma, \delta$, and $\eta$. However, the homogeneous coordinate representation itself has one degree of freedom to represent each point. Hence, the choice of the polyhedron transformed from $\mathcal{P}$ by (\ref{map}) has four degrees of freedom.

\section{Algorithms}

\subsection{Recognition procedure}

In the previous section, the existence of the class of polyhedra whose projections coincide with the given spherical Laguerre Voronoi diagram is proved by Theorem \ref{transMapThm}. In this section, we propose an algorithm for constructing a polyhedron with respect to a given tessellation.  Let $\mathcal{T}$ be a given spherical tessellation. If a polygon $i$ is adjacent to a polygon $j$, then there exists a geodesic arc $\wideparen{e}_{i,j}$ which is a tessellation edge partitioning polygons $i$ and $j$. The tessellation vertex that is the intersection of edges $\wideparen{e}_{i,j}$, $\wideparen{e}_{j,k}$, and $\wideparen{e}_{i,k}$ is denoted by $v_{i,j,k}$, as shown in Figure \ref{polygons}. During the algorithm, $\ell_{i,j}$ is defined as the line intersecting $\pi(\tilde{c}_i)$ and $\pi(\tilde{c}_j)$.

\begin{figure}[h]
	\begin{center}
		\definecolor{uququq}{rgb}{0.25098039215686274,0.25098039215686274,0.25098039215686274}
		\begin{tikzpicture}[line cap=round,line join=round,>=triangle 45,x=1.0cm,y=1.0cm]
		\clip(2.8160640470324694,3.5067554088250974) rectangle (9.718902215139163,6.965008986114204);
		\draw [shift={(6.342008502719588,3.617125273275333)}] plot[domain=0.035538821603734644:3.1108262356690495,variable=\t]({1.*3.168612589062205*cos(\t r)+0.*3.168612589062205*sin(\t r)},{0.*3.168612589062205*cos(\t r)+1.*3.168612589062205*sin(\t r)});
		\draw [shift={(11.743458897258435,5.290836630761076)}] plot[domain=2.8644792148819374:3.229205627611263,variable=\t]({1.*5.449456835313343*cos(\t r)+0.*5.449456835313343*sin(\t r)},{0.*5.449456835313343*cos(\t r)+1.*5.449456835313343*sin(\t r)});
		\draw [shift={(8.476872026504891,7.202718578526532)}] plot[domain=3.9767765086173488:4.666125717960333,variable=\t]({1.*3.2218106032764178*cos(\t r)+0.*3.2218106032764178*sin(\t r)},{0.*3.2218106032764178*cos(\t r)+1.*3.2218106032764178*sin(\t r)});
		\draw [shift={(5.341385632169549,8.590198620619178)}] plot[domain=4.419859512679194:4.9646990170726895,variable=\t]({1.*3.902136057996358*cos(\t r)+0.*3.902136057996358*sin(\t r)},{0.*3.902136057996358*cos(\t r)+1.*3.902136057996358*sin(\t r)});
		\draw (5.814606881363096,4.746316083031945) node[anchor=north west] {$v_{i,j,k}$};
		\draw (6.479352421394042,6.623284324326742) node[anchor=north west] {$\wideparen{e}_{i,j}$};
		\draw (4.497349383026773,6.021848919580811) node[anchor=north west] {$\text{polygon }i$};
		\draw (7.268554244414403,5.447751487777876) node[anchor=north west] {$\text{polygon }j$};
		\draw (5.659213233104137,4.067183854156534) node[anchor=north west] {$\text{polygon }k$};
		\draw (4.647708234213256,5.14703378540491) node[anchor=north west] {$\wideparen{e}_{i,k}$};
		\draw (7.559202352642416,4.5182604077159825) node[anchor=north west] {$\wideparen{e}_{j,k}$};
		\begin{scriptsize}
		\draw [fill=uququq] (6.314903792454812,4.81400408777157) circle (1.5pt);
		\end{scriptsize}
		\end{tikzpicture}
		\caption{Three adjacent polygons corresponding to a tessellation vertex.} \label{polygons}
	\end{center}
\end{figure}
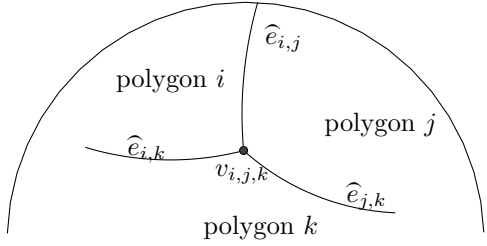

The overview of the recognition procedure is to try to construct a set of planes composing a polyhedron whose central projection coincides with $\mathcal{T}$. The planes are chosen sequentially. Suppose that the chosen sequence of the first three planes of polygons is $(i, j, k)$. The choice of the first two planes of polygons $i, j$ has the degrees of freedom, while the third plane of the polygon $k$ can be constructed uniquely up to the first two planes of polygons $i, j$.

The following algorithm is for the construction of the first three planes of the polyhedron corresponding to the three polygons $i, j, k$.
\\
\\
\textbf{Algorithm 1: Plane Construction with Three Adjacent Sites}
\\ \textbf{Input:} Tessellation edges $\wideparen{e}_{i,j}$, $\wideparen{e}_{j,k}$, $\wideparen{e}_{i,k}$, and degree-three tessellation vertex $v_{i,j,k}$.
\\ \textbf{Output:} The three planes $P_i, P_j, P_k$ with respect to polygons $i, j, k$.
\\ \textbf{Procedure:} 
\begin{enumerate}
	\item select a spherical circle  $\tilde{c}_i$ with center $p_i\in U$ and radius $r_i$ in polygon $i$.
	\item construct a plane $P_i:=\pi(\tilde{c}_i)$ of the spherical circle  $\tilde{c}_i$;
	\item construct a plane $P_{i,j}$, $P_{i, k}$, $P_{j,k}$ passing through $\wideparen{e}_{i,j}$, $\wideparen{e}_{i,k}$, and $\wideparen{e}_{j,k}$, respectively;
	\item find the line $\ell_{i,j}$ by intersecting $P_i$ and $P_{i,j}$, and $\ell_{i,k}$ by intersecting $P_i$ and $P_{i,k}$;
	\item construct a geodesic arc  $\wideparen{e}_{i,j}^c$ such that $\wideparen{e}_{i,j}^c$ passes through $p_i$ and is perpendicular to  $\wideparen{e}_{i,j}$;
	\item choose a point $q_j$ in polygon $j$ on the arc $\wideparen{e}_{i,j}^c$;
	\item construct the plane $P_j$ passing through $\ell_{i,j}$ and $q_j$;
	\item find the line $\ell_{j,k}$ by intersecting the planes $P_j$ and $P_{j,k}$;
	\item construct the plane $P_k$ passing through the line $\ell_{i,k}$ and $\ell_{j,k}$.
\end{enumerate}
\textbf{end Procedure}
\\
\\
In step 8 of Algorithm 1, $\ell_{j,k}$ is constructed from the intersection of $P_j$ and $P_{j,k}$ which is used in step 9 of the Algorithm 1. The following lemma guarantees the co-planarity of $\ell_{i,k}$ and $\ell_{j,k}$.

\begin{lemma}
	$\ell_{i,k}$ generated in step 4 of Algorithm 1 and $\ell_{j,k}$ generated in step 8 are co-planar.
\end{lemma}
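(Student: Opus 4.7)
The plan is to identify a concrete point lying on both $\ell_{i,k}$ and $\ell_{j,k}$; two intersecting lines are automatically co-planar, and this will give the conclusion.

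The key observation is that each plane $P_{i,j}$, $P_{i,k}$, $P_{j,k}$ passes through the origin $O$, since it contains a geodesic arc of the unit sphere $U$ centered at $O$ (every great circle lies in a plane through $O$). Because the tessellation vertex $v_{i,j,k}$ lies simultaneously on $\wideparen{e}_{i,j}$, $\wideparen{e}_{i,k}$, and $\wideparen{e}_{j,k}$, the whole ray from $O$ through $v_{i,j,k}$ is contained in all three of these planes.

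Next I would intersect this ray with $P_i$. Since $P_i = \pi(\tilde c_i)$ with $\tilde c_i$ a spherical circle of radius $r_i<\pi/2$, the plane $P_i$ does not pass through $O$ and is not parallel to the ray $\overrightarrow{Ov_{i,j,k}}$, so the intersection is a single point, call it $V$. Then $V\in P_i\cap P_{i,j}=\ell_{i,j}$ and $V\in P_i\cap P_{i,k}=\ell_{i,k}$. Since step~7 of the algorithm constructs $P_j$ to contain $\ell_{i,j}$, we get $V\in P_j$; combined with $V\in P_{j,k}$ (already established), this yields $V\in P_j\cap P_{j,k}=\ell_{j,k}$.

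Thus $V$ is a common point of $\ell_{i,k}$ and $\ell_{j,k}$, so the two lines intersect and are therefore co-planar, which is what was to be shown. The main (minor) obstacle is simply checking that $V$ is well-defined, i.e.\ that $P_i$ actually meets the radial line $\overrightarrow{Ov_{i,j,k}}$ transversally; this is immediate from $O\notin P_i$ and the non-degeneracy of $v_{i,j,k}$ being an interior-pointing tessellation vertex rather than being antipodal to $p_i$. Everything else reduces to tracking which planes each line sits in, and the proof is essentially a short incidence argument.
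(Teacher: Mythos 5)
Your proof is correct and follows essentially the same route as the paper's: both identify the point where the radial line through $v_{i,j,k}$ meets $P_i$ (the paper's $V_{i,j,k}$, your $V$), show it lies on both $\ell_{i,k}$ and $\ell_{j,k}$, and conclude co-planarity from the two lines intersecting. If anything, you are more explicit than the paper about why $V\in P_{j,k}$ (namely that all three planes $P_{i,j}, P_{i,k}, P_{j,k}$ contain the whole radial line through $v_{i,j,k}$), a step the paper's proof leaves implicit.
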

\begin{proof}
	From Algorithm 1, suppose that the construction sequence is $(i, j, k)$. Then there exists an intersection between planes $P_i$ and $P_{i,k}$ and $P_{i, j}$, written as $\ell_{i,k}$ and $\ell_{i,j}$, respectively which are obviously coplanar. Since $P_{i, k}$ and $P_{i, j}$ pass through $\wideparen{e}_{i,k}$ and $\wideparen{e}_{i,j}$ such that $\wideparen{e}_{i,k}$ intersects $\wideparen{e}_{i,k}$ at $v_{i, j, k}$, there exists a line $\ell_{i, j, k}$ which is the intersection of $P_{i, k}$ and $P_{i,j}$. Therefore, there exists the intersection point of $\ell_{i, k}, \ell_{i,j}$ and $\ell_{i,j,k}$, says $V_{i,j,k}$.
	
	By step 7 of Algorithm 1, the plane $P_j$ is constructed through $\ell_{i,j}$, and also $V_{i, j, k}$. Then $\ell_{j, k}$ is constructed through the intersection of $P_j$ and $P_{j, k}$. Hence, $V_{i, j, k}$ is laid in the line $\ell_{j, k}$. Since $V_{i,j,k}\in \ell_{i,k}$ and $V_{i,j,k}\in \ell_{j,k}$, there exists the unique plane passing through $\ell_{i,k}$ and $\ell_{j,k}$ which implies that $\ell_{i,k}$ and $\ell_{j,k}$ are co-planar. 
\end{proof}

We next extend this process to construct all planes $P_1, ..., P_n$ corresponding to the polygons $1, ..., n$ of the given tessellation. 

Let $\mathcal{V}_i$ be the set of vertices of the $i$-th spherical polygon. Note that $\mathcal{V}_i$ is written as $\mathcal{V}_i=\{v_{i,j_1,k_1},...,v_{i,j_m,k_m}\}$ where $m$ is the number of vertices of the $i$-th spherical polygon. The set of spherical tessellation vertices is denoted by $\mathcal{V}=\cup_{i=1}^n \mathcal{V}_i$. 
\\
\\
\textbf{Algorithm 2: Construction of $n$ planes }
\\ \textbf{Input:} Spherical tessellation $\mathcal{T}$ where all vertices are of degree 3, and the set $\mathcal{V}$ of tessellation vertices.
\\ \textbf{Output:} The planes $P_1, ..., P_n$ with respect to the polygons $1, ..., n$, and $\text{Mark}(v)\in \{0, 1\}$ for $v\in \mathcal{V}$.
\\ \textbf{Comment:} $\mathbb{P}$ is the set of planes constructed in the  procedure.
\\ \textbf{Procedure:} 
\begin{enumerate}
	\item make $\mathbb{P}$ empty;
	\item set $\text{Mark}(v)=0$ for all $v \in \mathcal{V}$;
	\item choose an arbitrary vertex $v_{i, j, k}\in \mathcal{V}$ and employ Algorithm 1 to construct planes $P_i, P_j, P_k$;
	\item add the planes $P_i, P_j, P_k$ to $\mathbb{P}$;
	\item set $\text{Mark}(v_{i, j, k})=1$;
	\item \textbf{while} there exists vertex $v_{p,q,l}\in\mathcal{V}$ such that $\text{Mark}(v_{p,q,l})=0$ and exactly two planes $P_p, P_q$ are included in $\mathbb{P}$,\\
	\textbf{do} \\
	\text{\hspace{0.4cm}} apply steps 3, 4 of Algorithm 1, where $(i, j, k)$ are read as $(p, q, l)$, to find $\ell_{p,q}$ and $\ell_{p,l}$;\\
	\text{\hspace{0.4cm}} compute $\ell_{q,l}$ from the intersection of $P_q$ and $P_{q,l}$;\\
	\text{\hspace{0.4cm}} compute $\ell_{p, q, l}$ from the intersection of $P_{q, l}$ and  $P_{p, l}$;\\
	\text{\hspace{0.4cm}} compute $V_{p, q, l}$ from the intersection of $P_p$ and $\ell_{p, q, l}$;\\
	\text{\hspace{0.4cm}} choose a point $v'_{q, l}$ on the line $\ell_{q, l}$;\\
	\text{\hspace{0.4cm}} construct a plane $P_l$ from the point $v'_{q, l}, V_{p, q, l}$ and line $\ell_{p,l}$.\\
	\text{\hspace{0.4cm}} add $P_l$ to $\mathbb{P}$;\\
	\text{\hspace{0.4cm}} set $\text{Mark}(v_{p, q, l})=1$;\\
	\textbf{end while}
\end{enumerate}
\textbf{end Procedure}
\\
\\
In step 6 of Algorithm 2, for any arbitrary planes $P_p, P_q\in \mathbb{P}$ to construct a plane $P_l$, the choice of the point $v'_{q, l}$ from the line $\ell_{q, l}$ affects the uniqueness of the plane. That is, it takes a degree of freedom for each $v'_{q,l}$. However, if the given tessellation $\mathcal{T}$ is a spherical Laguerre Voronoi diagram, then a point $v'_{q, l}$ in step 6 of Algorithm 2 is arbitrarily chosen to obtain the unique plane $P_l$ passing through $\ell_{p, l}$ and $\ell_{q, l}$, which means that there is no degree of freedom in the choice of $v'_{q,l}$. For that purpose, we claim that for any point $v'_{q, l}$ on the line $\ell_{q, l}$ and $\ell_{p, l}$ are co-planar, which is proved by the following lemma.

\begin{lemma}\label{Coplanar2}
	If $\mathcal{T}$ is a spherical Laguerre Voronoi diagram, then the plane $P_l$ of the construction sequence $(p, q, l)$ constructed in the step 6 of Algorithm 2 does not depend on the choices of $v'_{q,l}$.
\end{lemma}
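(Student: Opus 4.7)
The plan is to exhibit a polyhedron $\mathcal{P}$ whose central projection is $\mathcal{T}$ and whose $p$-th and $q$-th face planes are exactly the $P_p, P_q$ already constructed by Algorithm~2. Once such a $\mathcal{P}$ is available, its $l$-th face lies in a uniquely determined plane $P_l^{\ast}$, and I will show that the plane produced in step~6 of Algorithm~2 always coincides with $P_l^{\ast}$, irrespective of the choice of $v'_{q,l}$.

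First, apply Proposition~\ref{twoWay} to obtain some reference polyhedron $\mathcal{P}^{\ast}$ projecting to $\mathcal{T}$. The family of projection-preserving transformations~(\ref{map}) has four continuous parameters, and this matches the four degrees of freedom used by Algorithm~1 in selecting $P_p$ (three in the generating circle $\tilde c_p$) together with $P_q$ (one in the point on the arc chosen in step~6 of Algorithm~1). I would argue that some transformation of the form~(\ref{map}) sends $\mathcal{P}^{\ast}$ to a polyhedron $\mathcal{P}$ whose $p$-th and $q$-th face planes are exactly $P_p$ and $P_q$. The remark following Theorem~\ref{transMapThm} ensures that the transformed polyhedron remains convex for parameters close to the identity, so this $\mathcal{P}$ is admissible.

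Granted such a $\mathcal{P}$, let $P_l^{\ast}$ denote its $l$-th face plane and set $\tilde c_i = P_i \cap U$ for each face. Since $\mathcal{P}$ projects to $\mathcal{T}$, the inverse direction of Proposition~\ref{twoWay} identifies $\mathcal{T}$ as the spherical Laguerre Voronoi diagram of $\tilde c_1,\ldots,\tilde c_n$; in particular $\wideparen{e}_{p,l}$ lies on the Laguerre bisector of $\tilde c_p$ and $\tilde c_l$. By Theorem~\ref{SugiThm2}, the line $P_p \cap P_l^{\ast}$ lies in the plane $P_{p,l}$ through $\wideparen{e}_{p,l}$ and $O$; since it also lies in $P_p$, it coincides with $\ell_{p,l} = P_p \cap P_{p,l}$ as computed by the algorithm. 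The same reasoning gives $P_q \cap P_l^{\ast} = \ell_{q,l}$. Thus $\ell_{p,l}$ and $\ell_{q,l}$ are two distinct lines in the single plane $P_l^{\ast}$, and the auxiliary point $V_{p,q,l} = P_p \cap \ell_{p,q,l}$ is the vertex of $\mathcal{P}$ projecting to $v_{p,q,l}$ and hence also lies in $P_l^{\ast}$. Consequently, for any $v'_{q,l}\in\ell_{q,l}$ that is not on $\ell_{p,l}$, the plane through $\ell_{p,l}$ and $v'_{q,l}$ is $P_l^{\ast}$, independent of the choice of $v'_{q,l}$.

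The main obstacle is the first step: making rigorous the claim that the four-parameter family~(\ref{map}) actually realizes every admissible pair of first two planes $P_p, P_q$ as the image of the corresponding faces of $\mathcal{P}^{\ast}$. I would establish this either by writing the matching conditions $f(P_p^{\ast})=P_p$ and $f(P_q^{\ast})=P_q$ explicitly in the parameters $\alpha,\beta,\gamma,\delta,\eta$ and solving the resulting system, or by an inverse-function-theorem argument at the identity, verifying that the map from the parameter space (modulo the overall projective scale) to the pair of transformed planes has full rank. Once this realization step is in place, the remainder of the proof above is essentially routine.
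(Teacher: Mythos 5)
Your overall strategy---reduce to a reference polyhedron $\mathcal{P}$ whose $p$-th and $q$-th face planes are the algorithm's $P_p,P_q$, then read off $P_l$ as the $l$-th face plane via Theorem \ref{SugiThm2}---is sound in its second half, and that half is essentially the paper's own argument: the paper shows directly that $\ell_{p,l}$ and $\ell_{q,l}$ meet at the point $V_{p,q,l}=P_p\cap\ell_{p,q,l}$, so that the plane through $\ell_{p,l}$ and any $v'_{q,l}\in\ell_{q,l}$ is the unique plane containing both lines. The difference is that the paper applies Theorem \ref{SugiThm2} directly to the constructed planes, taking for granted that, since $\mathcal{T}$ is a Laguerre diagram, $P_p$ and $P_q$ behave like face planes of a generating polyhedron; you instead try to justify that step by exhibiting an actual polyhedron realizing $P_p,P_q$ as faces.

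The genuine gap is precisely the realization step, and you name it yourself: you never prove that some transformation of the form (\ref{map}) carries $\mathcal{P}^{\ast}$ to a polyhedron whose $p$-th and $q$-th face planes are exactly the $P_p,P_q$ produced by Algorithms 1 and 2. The parameter count is suggestive (four parameters against three conditions for $P_p$ plus one for $P_q$, since $P_q$ is already constrained to contain $\ell_{p,q}$), but ``I would establish this either by \dots or by \dots'' is a plan, not a proof; the surjectivity (rank) of the parameter-to-planes map has to be verified, not asserted. Moreover, even granting such an $f$, you invoke the remark after Theorem \ref{transMapThm}, which guarantees convexity only for parameters near the identity, whereas the $f$ you need may be far from the identity (the algorithm's $\tilde c_p$ can be any circle inside polygon $p$); if $f(\mathcal{P}^{\ast})$ fails to be convex or to contain $O$, its central projection need not coincide with $\mathcal{T}$, and neither Proposition \ref{twoWay} nor Theorem \ref{SugiThm2} can be applied to it. Finally, watch for circularity: the ``four degrees of freedom'' transitivity you are implicitly leaning on is formalized in Theorem \ref{4degs}, whose proof in the paper itself uses Lemma \ref{Coplanar2}, so your realization step must be established independently of that theorem.
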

\begin{proof}
	Let $\mathcal{T}$ be a spherical Laguerre Voronoi diagram. We prove $V_{p,q,l}\in\ell_{q,l}$ to imply that any choices of $v'_{q,l}$ on $\ell_{q,l}$ gives the same plane $P_l$ passing through $V_{p,q,l}, \ell_{p,l}$ and $\ell_{q,l}$.
	
	In the step 6 of Algorithm 2, suppose that $v_{p, q, l}$ is the tessellation vertex of the adjacent polygons $p, q, l$. Suppose that there are planes $P_p, P_q \in \mathbb{P}$. By Theorem \ref{SugiThm2}, $\ell_{p,q}$ of the intersection of $P_p$ and $P_q$ is laid on the plane $P_{p,q}$ passing through the geodesic arc $\wideparen{e}_{p,q}$. Note that the line $\ell_{p,q,l}$ of the intersection of planes $P_{p, q}$ and $P_{q, l}$ is also included in $P_{p, l}$, and the intersection $V_{p, q, l}$ of $P_p$ and $\ell_{p, q, l}$ is on the lines $\ell_{p, q}$ and $\ell_{p, l}$.
	
	Using the fact that $\ell_{p,q}$ lays on the plane $P_{p,q}$ of $\wideparen{e}_{p,q}$ and Theorem \ref{SugiThm2}, it is implied that $\ell_{p,l}$ ,the intersection of $P_p$ and $P_{p,l}$, and $\ell_{q,l}$, the intersection of $P_q$ and $P_{q,l}$, intersect at $V_{p,q,l}$, which means that $V_{p,q,l}\in\ell_{q,l}$. Therefore, there exist a unique plane $P_l$ independent from the choice of $v'_{q,l}$.
\end{proof}

In Algorithm 1, we arbitrarily choose the first spherical circle  $\tilde{c}_i$ (i.e., with generator position $p_i$ and radius $r_i$), and the generator position with the choice of $q_j$ of the adjacent polygon $j$ which lies on the geodesic arc $\wideparen{e}_{i,j}^c$. This means that we have four degrees of freedom pursuing in Algorithm 1. This reflects the freedom in the choice of the polyhedron for representing the spherical Laguerre Voronoi diagram as shown in the following theorem.

\begin{theorem}\label{4degs}
	There are exactly four degrees of freedom in the choice of a polyhedron $\mathcal{P}$ with respect to the given spherical Laguerre Voronoi diagram.
\end{theorem}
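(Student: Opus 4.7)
The plan is to establish the theorem by matching a lower bound and an upper bound on the number of free parameters. The lower bound comes from counting independent choices that Algorithm 1 actually makes, and the upper bound comes from the projection-preserving transformation group of Section 3, whose dimension was already analyzed after Theorem \ref{transMapThm}.

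For the lower bound, I would walk through Algorithm 1 and list the free choices: (i) the center $p_i$ of the initial spherical circle $\tilde{c}_i$, which is a point ranging inside polygon $i \subset U$ and so contributes two degrees of freedom; (ii) the radius $r_i$, which fixes the plane $P_i = \pi(\tilde{c}_i)$ and contributes one more degree; and (iii) the point $q_j$ on the perpendicular geodesic arc $\wideparen{e}_{i,j}^c$, which together with $\ell_{i,j}$ determines $P_j$ and contributes the fourth degree. By Lemma \ref{Coplanar2}, once these four parameters are chosen the remaining planes $P_k, P_l, \dots$ are forced, so Algorithm 2 completes the polyhedron with no further freedom. Hence there are at least four independent parameters producing distinct polyhedra projecting onto $\mathcal{T}$.

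For the upper bound, I would invoke Proposition \ref{twoWay} and Theorem \ref{transMapThm}. Given two convex polyhedra $\mathcal{P}$ and $\mathcal{P}'$ containing $O$ whose central projections both coincide with $\mathcal{L}$, each vertex of $\mathcal{P}'$ lies on the same line through $O$ as the corresponding vertex of $\mathcal{P}$. The unique projective transformation matching corresponding vertices is therefore a projection-preserving mapping in the sense of Definition \ref{PPP}, so by a converse to Theorem \ref{transMapThm} it must have the matrix form (\ref{map}). This matrix depends on five scalars $\alpha,\beta,\gamma,\delta,\eta$, but homogeneous coordinates absorb an overall scale, leaving four effective parameters, as already noted in the discussion following Theorem \ref{transMapThm}.

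The main technical obstacle is the converse just invoked: I must check that every $4\times 4$ invertible matrix $M$ satisfying $Mv$ being a scalar multiple of $v$ modulo a component in the $t$-direction, for every $v=(t,x,y,z)$, reduces to the form (\ref{map}). This is a direct linear-algebra computation: writing $M=(m_{ab})$ and applying $M$ to the basis vectors $(0,1,0,0)$, $(0,0,1,0)$, $(0,0,0,1)$, the projection-preservation condition forces the lower-right $3\times 3$ block to be a scalar multiple $\eta I$ of the identity, while the first row remains arbitrary. Once this structural lemma is in hand, combining the lower bound of four from Algorithm 1 with the upper bound of four from the transformation group yields the claim.
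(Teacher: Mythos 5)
Your lower bound is sound and takes a genuinely different route from the paper: you count the four free parameters of Algorithm 1 (two for $p_i$, one for $r_i$, one for $q_j$) and use Lemma \ref{Coplanar2} to see that the rest of the construction is forced, whereas the paper obtains the lower bound from the four effective parameters of the transformation group (\ref{map}). Your version works, provided you add the (easy) observation that distinct parameter choices yield distinct planes $P_i,P_j$ and hence distinct polyhedra.

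The upper bound, however, has a genuine gap. You invoke ``the unique projective transformation matching corresponding vertices'' of $\mathcal{P}$ and $\mathcal{P}'$. A projective transformation of $P^3(\mathbb{R})$ is determined by the images of five points in general position, so for a polyhedron with more than five vertices there is no a priori reason that \emph{any} single projective transformation maps every vertex of $\mathcal{P}$ to the corresponding vertex of $\mathcal{P}'$; establishing that such a map exists is precisely the hard part of the upper bound, and your converse structural lemma cannot supply it because that lemma presupposes the map. The paper closes exactly this gap with the rigidity you used only in your lower bound: it picks four vertices $v_{i,j,k_1}, v_{i,j,k_2}, v_{i,m_1,m_2}, v_{j,n_1,n_2}$ (which pin down $P_i$ and $P_j$), solves for the four parameters of (\ref{map}) sending them to the corresponding vertices of $\mathcal{P}'$, and then concludes $f(\mathcal{P})=\mathcal{P}'$ because all remaining vertices of both polyhedra are determined from these four by Algorithm 2 and Lemma \ref{Coplanar2}. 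Alternatively, you could repair your argument without the transformation group at all: every polyhedron projecting onto $\mathcal{T}$ arises from Algorithms 1--2 for some choice of $(P_i,P_j)$, a four-parameter family whose completion is deterministic by Lemma \ref{Coplanar2}, so the set of such polyhedra is at most four-dimensional. Two smaller defects in the transformation route: Definition \ref{PPP} requires ray-preservation at every point of $P^3(\mathbb{R})$, not just at the finitely many vertices, so even an existing vertex-matching map needs a further argument before your structural lemma applies; and in that lemma, testing only $(0,1,0,0)$, $(0,0,1,0)$, $(0,0,0,1)$ shows the lower-right block is diagonal, not scalar --- you also need a vector such as $(0,1,1,0)$, and the condition $f(O)=O$ to force the first column below the top entry to vanish.
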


\begin{proof}
	To prove this theorem, we will derive the lower and upper bounds for the degrees of freedom in the choice of the polyhedron.
	
	The lower bound is four as we have seen in Theorem \ref{transMapThm} and the discussion immediately after the theorem.
	
	For the upper bound, we show that if we have two polyhedra $\mathcal{P}$ and $\mathcal{P}'$ whose projections give the same spherical Laguerre Voronoi diagram, $\mathcal{P}$ is transformed to $\mathcal{P}'$ by the transformation (\ref{map}).
	
	Let $\mathcal{P}$ and $\mathcal{P}'$ be two different polyhedra whose central projections give the same spherical Laguerre Voronoi diagram. Assume that $P_i$ and $P_j$ are planes containing mutually adjacent faces $F_i, F_j$ of $\mathcal{P}$. Then there are vertices $v_{i,j,k_1}, v_{i,j,k_2}\in F_i \cap F_j$ for some $k_1, k_2$. Without loss of generality, we choose vertices $v_{i, m_1, m_2}\in F_i$ and $v_{j, n_1, n_2}\in F_j$ of the polyhedron $\mathcal{P}$. Note that the remaining polyhedron vertices are uniquely constructed by Algorithm 2 and Lemma \ref{Coplanar2}.
	
	On the polyhedron $\mathcal{P}'$, we consider the polyhedron vertices $v'_{i,j,k_1}, v'_{i,j,k_2}\in F'_i \cap F'_j$ and $v'_{i, m_1, m_2}\in F'_i$ and $v'_{j, n_1, n_2}\in F'_j$. Then there exists a transformation $f$ which transforms the vertices $v_{i,j,k_1}$, $v_{i,j,k_2}$, $v_{i, m_1, m_2}$, $v_{j, n_1, n_2}$ to the vertices $v'_{i,j,k_1}$, $v'_{i,j,k_2}$, $v'_{i, m_1, m_2}$, $v'_{j, n_1, n_2}$, respectively.
	
	Let $\mathcal{P}''$ be the polyhedron transformed by $f$ from $\mathcal{P}$. Since the first four points of polyhedron $\mathcal{P}''$ are $v'_{i,j,k_1}=f(v_{i,j,k_1})$ , $v'_{i,j,k_2}=f(v_{i,j,k_2})$, $v'_{i, m_1, m_2}=f(v_{i, m_1, m_2})$, $v'_{j, n_1, n_2}=f(v_{j, n_1, n_2})$, and all the other vertices are determined by Algorithm 2, $\mathcal{P}''$ is the same polyhedron as $\mathcal{P}'$. This means that the upper bound of the degrees of freedom is four.
	
	Thus, the degrees of freedom for choosing the polyhedron are exactly four.
\end{proof}

Note that in Algorithm 1, we choose 4 parameters arbitrarily, two for $p_i$, one for $r_i$ in step 1 and one for $q_j$ in step 6. Therefore, if the given tessellation $\mathcal{T}$ is a spherical Laguerre Voronoi diagram, the projection of the arrangement of the planes constructed by Algorithm 2 on the sphere gives $\mathcal{T}$ for any spherical circle $\tilde{c}_i$ and the point $q_j$ is chosen in Algorithm 1. 

We have the following corollary which is directly implied from Lemma \ref{Coplanar2}, Algorithm 2, and Theorem \ref{4degs}.

\begin{corollary}\label{uniquePoly}
	If $\mathcal{T}$ is a spherical Laguerre Voronoi diagram, then Algorithm 2 gives the unique polyhedron up to the choice of the first four degrees of freedom.	
\end{corollary}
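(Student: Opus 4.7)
The plan is to show that once the four initial parameters in Algorithm 1 are fixed, every subsequent step of Algorithm 2 is forced, and that the resulting polyhedron coincides with the (essentially unique) one predicted by Theorem \ref{4degs}. I would fix the four real parameters chosen in Algorithm 1: the two coordinates of the generator $p_i$, the radius $r_i$, and the one-dimensional position of $q_j$ along the geodesic arc $\wideparen{e}_{i,j}^c$. These completely determine the first three planes $P_i, P_j, P_k$. The task is then to verify that the remaining planes $P_l$ produced by the while loop are forced, with no additional freedom.

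I would proceed by induction on the iterations of the while loop. At a generic iteration, the algorithm picks an unmarked vertex $v_{p,q,l}$ whose two adjacent faces $P_p, P_q$ are already in $\mathbb{P}$, then constructs $\ell_{p,l}, \ell_{q,l}$, picks an auxiliary point $v'_{q,l}$ on $\ell_{q,l}$, and builds $P_l$ from $V_{p,q,l}, v'_{q,l}$ and $\ell_{p,l}$. A priori this introduces a one-parameter ambiguity through the choice of $v'_{q,l}$, but Lemma \ref{Coplanar2} says precisely that, under the hypothesis that $\mathcal{T}$ is a spherical Laguerre Voronoi diagram, $\ell_{p,l}$ and $\ell_{q,l}$ are co-planar (they meet at $V_{p,q,l}$), so the plane $P_l$ is independent of the choice of $v'_{q,l}$. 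Thus at every iteration the newly constructed plane is uniquely determined by the planes already constructed, and by induction the whole sequence $P_1,\dots,P_n$ is uniquely determined by the four initial parameters.

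To finish, I would match this output against Theorem \ref{4degs}. For any other choice of the four initial parameters we obtain some polyhedron $\mathcal{P}'$ whose projection is $\mathcal{T}$; by Theorem \ref{4degs}, $\mathcal{P}'$ is obtained from $\mathcal{P}$ by a transformation of the form (\ref{map}), i.e.\ the four-parameter family of projection-preserving maps already exhausts all polyhedral realisations of $\mathcal{T}$. Hence the four degrees of freedom of Algorithm 2 are exactly the four degrees of freedom of the class of admissible polyhedra, and after fixing them the polyhedron output by Algorithm 2 is unique.

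The main obstacle I anticipate is the bookkeeping for the while loop, in particular checking that it actually reaches every face: one must argue that, starting from three adjacent planes and iterating, one can always find an unmarked vertex $v_{p,q,l}$ with exactly two of its three incident faces already in $\mathbb{P}$, until $|\mathbb{P}|=n$. This follows from connectedness of the face-adjacency graph of $\mathcal{T}$ together with the degree-three assumption at every vertex, but it is worth stating explicitly so that the induction hypothesis is well-posed; once this is in hand, the combination of Lemma \ref{Coplanar2} and Theorem \ref{4degs} gives the corollary.
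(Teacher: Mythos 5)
Your proposal is correct and follows essentially the same route as the paper, which states the corollary as a direct consequence of Lemma \ref{Coplanar2}, Algorithm 2, and Theorem \ref{4degs} without writing out a separate proof. Your expanded version --- induction over the while loop using Lemma \ref{Coplanar2} to kill the $v'_{q,l}$ ambiguity, then invoking Theorem \ref{4degs} to match the four parameters against the full class of admissible polyhedra --- is a faithful (and usefully more explicit, e.g.\ on the reachability of every face) elaboration of exactly that argument.
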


We use the contraposition of Corollary \ref{uniquePoly} to verify that, if the constructed planes composing a polyhedron with respect to the tessellation $\mathcal{T}$ using Algorithm 2 are not uniquely constructed from the first choice of four degrees of freedom, then the given tessellation is not the spherical Laguerre Voronoi diagram.

The following lemma characterizes the properties of the polyhedron vertices and the given tessellation.

\begin{lemma}\label{ConsCheckLemma}
	Let $\mathcal{T}$ be a given tessellation  and $\mathbb{P}$ a set of planes constructed by Algorithm 2. $\mathcal{T}$ is a spherical Laguerre Voronoi diagram if and only if for all vertices $v_{i, j, k}\in \mathcal{V}$,
	\begin{enumerate}
		\item there exists the unique point $V_{i,j,k}$ of the intersection of the plane $P_i, P_j, P_k \in \mathbb{P}$; and
		\item there exists $t\in\mathbb{R}\backslash\{0\}$ such that $V_{i,j,k}=tv_{i,j,k}$.
	\end{enumerate}
\end{lemma}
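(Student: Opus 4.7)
I would prove both directions by translating through Proposition~\ref{twoWay}: the claim is that conditions~(1) and (2) are precisely the vertex-level certificate that the planes of $\mathbb{P}$ bound a convex polyhedron containing $O$ whose central projection is $\mathcal{T}$.

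\emph{Necessity.} Assume $\mathcal{T}$ is a spherical Laguerre Voronoi diagram. By Proposition~\ref{twoWay} and Corollary~\ref{uniquePoly}, Algorithm~2 produces an actual convex polyhedron $\mathcal{P}$ whose faces lie on the planes in $\mathbb{P}$ and whose central projection coincides with $\mathcal{T}$. At a vertex $v_{i,j,k}\in\mathcal{V}$, the three incident faces of $\mathcal{P}$ meet at a single polyhedron vertex $V_{i,j,k}=P_i\cap P_j\cap P_k$, giving~(1). Since $\mathcal{P}$ central-projects to $\mathcal{T}$, this vertex $V_{i,j,k}$ must lie on the ray from $O$ through $v_{i,j,k}$, which is exactly $V_{i,j,k}=t\,v_{i,j,k}$ for some nonzero $t$, giving~(2).

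\emph{Sufficiency.} Assume (1) and (2) at every $v_{i,j,k}\in\mathcal{V}$. I would assemble a candidate polyhedron $\mathcal{P}$ with faces on the planes $P_i$, edges along the lines $\ell_{i,j}$ constructed in Algorithm~2, and vertices $\{V_{i,j,k}\}$. Condition~(1) supplies these vertices, and condition~(2) places each on the ray from $O$ through the corresponding tessellation vertex, so the vertices of $\mathcal{P}$ project to those of $\mathcal{T}$. By step~4 of Algorithm~1 each $\ell_{i,j}$ lies in the plane $P_{i,j}$ through $O$ and the geodesic arc $\wideparen{e}_{i,j}$; hence its central projection is contained in the great circle carrying $\wideparen{e}_{i,j}$, and by condition~(2) the two endpoints of the corresponding polyhedron edge project to the endpoints of $\wideparen{e}_{i,j}$. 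Thus every edge of $\mathcal{P}$ projects exactly onto its corresponding tessellation edge, and the central projection of $\mathcal{P}$ reproduces $\mathcal{T}$. Proposition~\ref{twoWay} then yields that $\mathcal{T}$ is a spherical Laguerre Voronoi diagram.

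\emph{Main obstacle.} The delicate point in the sufficiency direction is showing that the planes $P_1,\ldots,P_n$ actually bound a convex polyhedron containing $O$, rather than merely a planar arrangement with consistent combinatorics. I would address this using step~1 of Algorithm~1, where $\tilde c_i$ is chosen inside polygon $i$ with radius less than $\pi/2$, so $O\in H(\tilde c_i)$. Conditions~(1) and~(2), combined with the convexity of each spherical polygon of $\mathcal{T}$ and the cyclic ordering of its vertices $V_{i,j,k}$ around the face $F_i\subset P_i$, propagate this half-space choice consistently from face to face. This exhibits the intersection of the half-spaces $H(\tilde c_i)$ as a convex body with $O$ in its interior whose boundary is $\mathcal{P}$, which is exactly what Proposition~\ref{twoWay} requires.
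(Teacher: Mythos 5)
Your proposal is correct and follows essentially the same route as the paper: both directions are reduced to Proposition~\ref{twoWay}, with necessity resting on the uniqueness of the constructed planes (Lemma~\ref{Coplanar2} / Corollary~\ref{uniquePoly}) and sufficiency on the observation that conditions (1)--(2) force the plane arrangement from Algorithm~2 to centrally project back onto $\mathcal{T}$. The only difference is one of detail: you explicitly flag, and sketch a remedy for, the need to verify that the planes actually bound a convex polyhedron containing $O$, a point the paper's one-sentence sufficiency argument leaves implicit.
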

\begin{proof}
	Firstly, let $\mathcal{T}$ be a spherical Laguerre Voronoi diagram and $v_{i,j,k}\in \mathcal{V}$. Without loss of generality, assume that $P_i, P_j$ are constructed sequentially. By Lemma \ref{Coplanar2}, $P_k$ is uniquely determined from $P_i$ and $P_j$ which is constructed from $\ell_{i,k}$ and $\ell_{j,k}$. Hence, $V_{i,j,k}\in P_i\cap P_j \cap P_k$ uniquely. In addition, since $v_{i,j,k}\in \wideparen{e}_{i,j}\cap \wideparen{e}_{i,k} \cap \wideparen{e}_{j,k}$ and $\ell_{i,j,k}:=P_{i,j}\cap P_{j,k}\cap P_{i,k}$ is a line passing through $O$ and $v_{i,j,k}$,  Algorithm 2 and Lemma \ref{Coplanar2} implies that $V_{i,j,k}$ is laid on the line $\ell_{i,j,k}$. That is, the condition 1 and 2 of the lemma are satisfied.
	
	Conversely, suppose that the conditions 1 and 2 of the lemma hold. Since $V_{i,j,k}$'s are formulated from the halfspaces intersection of the planes in $\mathbb{P}$ which are constructed from Algorithm 2, and the central projection is preserved from the condition 2, Proposition \ref{twoWay} implies that $\mathcal{T}$ is a spherical Laguerre Voronoi diagram.
\end{proof}

Lemma \ref{ConsCheckLemma} means that from the constructed polyhedron, the vertices selected in Algorithm 2, i.e., the vertices $v$ with $\text{Mark}(v)=1$ set in Algorithm 2, are located on the lines emanating from the origin and passing through the associated tessellation vertices. Therefore, the given tessellation $\mathcal{T}$ is not the spherical Laguerre Voronoi diagram if other vertices are not necessarily on the radial lines emanating from the origin and passing through the associated tessellation vertices.

As mentioned before, the uniqueness of the plane $P_l$ constructed from $P_p$ and $P_q$ is up to the choice of the point $v'_{q,l}$ in the step 6 of Algorithm 2. Lemma \ref{Coplanar2}, Corollary \ref{uniquePoly} together with Lemma \ref{ConsCheckLemma} lead to Algorithm 3 for checking whether the given tessellation is a spherical Laguerre Voronoi diagram or not.

From Algorithm 2, the set of tessellation vertices $\mathcal{V}$ is divided into two groups: marked vertices and unmarked vertices. The unmarked vertices are vertices $v\in\mathcal{V}$ such that $\text{Mark}(v)=0$. Remark that for the marked vertex $v_{p,q,l}$, the plane $P_l$ constructed from $P_p$ and $P_q$ in Algorithm 2 passes through the intersection point $V_{p,q,l}$ on the line $\ell_{p,q,l}$ passing through $v_{p,q,l}$. Therefore, it is sufficient to check the consistency in Algorithm 3 among the unmarked vertices.
\\
\\
\textbf{Algorithm 3: Spherical Laguerre Voronoi Diagram Recognition}
\\ \textbf{Input:} The tessellation $\mathcal{T}$, the set $\mathbb{P}$ and $\text{Mark}(v), v\in \mathcal{V}$ constructed from Algorithm 2.
\\ \textbf{Output:} ``true'' or ``false''.
\\ \textbf{Comment:} ``true'' means that $\mathcal{T}$ is a spherical Laguerre Voronoi diagram.
\\ \textbf{Procedure:} 
\begin{enumerate}
	\item choose an unmarked vertex $v_{p,q,l}$;
	\item \textbf{if} compute the intersection $V_{p,q,l}$ of planes $P_p, P_q, P_l\in\mathbb{P}$, there exists $t\in \mathbb{R}$ such that $V_{p,q,l}=tv_{p,q,l}$ \textbf{then}\\
	\text{\hspace{0.4cm}} $\text{Mark}(v_{p, q, l})=1$;\\
	\text{\hspace{0.0cm}} \textbf{else}\\
	\text{\hspace{0.4cm}} report ``false'' and terminate the process;\\
	\textbf{end if};
	\item \textbf{if} $\text{Mark}(v)=1$ for all $v\in \mathcal{V}$ \textbf{then}\\
	\text{\hspace{0.4cm}} report ``true'';\\
	\text{\hspace{0.0cm}} \textbf{else}\\
	\text{\hspace{0.4cm}} go to step 1;\\
	\textbf{end if}
\end{enumerate}
\textbf{end Procedure}

\subsection{Algorithm analysis}

In this section, we analyze the proposed algorithm. In Algorithm 1, the main operations are for the plane construction and the intersection of two planes, each of whose complexity is $O(1)$. Since Algorithm 1 is related to the first three polygons, the complexity of Algorithm 1 is $O(1)$.

The spherical Laguerre Voronoi diagram recognition problem is mainly considered by Algorithms 2 and 3. The following theorem shows the complexity of the problem.

\begin{theorem}
	For an $n$ cells tessellation, we can judge whether the given tessellation is a spherical Laguerre Voronoi diagram in $O(n \log n)$ time.
\end{theorem}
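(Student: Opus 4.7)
The plan is to bound the cost of Algorithms 1, 2, and 3 separately and combine them, after first bounding the combinatorial size of $\mathcal{T}$.

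First, I would use the degree-3 assumption together with Euler's formula for a spherical map, $V-E+F=2$, and the handshaking identity $2E=3V$, to deduce $V=2n-4$ and $E=3n-6$. Hence the total number of tessellation vertices, edges, and faces is $\Theta(n)$, and we may freely speak of ``$O(n)$ combinatorial features'' throughout. Algorithm 1 constructs a constant number of planes and computes a constant number of plane--plane and plane--line intersections, and therefore runs in $O(1)$ time.

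The main effort goes into bounding Algorithm 2. Each iteration of the while loop in step 6 adds exactly one plane to $\mathbb{P}$, so the loop executes at most $n-3$ times, and each iteration performs only $O(1)$ geometric primitive operations (plane--plane intersections, plane--line intersections, and choosing a point on a line). The combinatorial bookkeeping is what needs care: in order to avoid a linear scan of $\mathcal{V}$ per iteration when locating the next vertex $v_{p,q,l}$ with $\text{Mark}(v_{p,q,l})=0$ and exactly two of its three incident planes already in $\mathbb{P}$, I would maintain a priority queue (or balanced search tree) $\mathcal{Q}$ of such ``ready'' vertices, together with a counter $c(v)\in\{0,1,2,3\}$ for every $v\in\mathcal{V}$ recording how many of its incident planes are already in $\mathbb{P}$. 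When a new plane $P_l$ is inserted, I increment $c(v)$ for each $v$ incident to face $l$ and, if $c(v)$ reaches $2$, push $v$ into $\mathcal{Q}$. Because every vertex is incident to exactly three faces, the total number of counter updates across the whole algorithm is $\sum_{l=1}^{n}|V(l)| = 3V = O(n)$, and consequently the total number of insertions into and extractions from $\mathcal{Q}$ is $O(n)$. With $O(\log n)$ per queue operation, Algorithm 2 runs in $O(n\log n)$.

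Finally, Algorithm 3 visits each still-unmarked vertex in $\mathcal{V}$ at most once, computes the triple intersection $V_{p,q,l}$ of the three associated planes in $\mathbb{P}$, and tests whether $V_{p,q,l}$ is a scalar multiple of $v_{p,q,l}$; both the intersection and the collinearity test are $O(1)$. Since $|\mathcal{V}|=O(n)$, Algorithm 3 runs in $O(n)$ time. Adding the contributions gives $O(1)+O(n\log n)+O(n)=O(n\log n)$. The principal obstacle is the amortized bookkeeping inside Algorithm 2, namely justifying that the search for the next ready vertex does not cost $\Theta(n)$ per iteration; this is exactly what the face--vertex incidence bound $\sum_{l}|V(l)|=O(n)$ resolves, by capping the total queue traffic at $O(n)$ and thereby giving the final $O(n\log n)$ bound.
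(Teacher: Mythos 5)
Your proposal is correct and follows essentially the same route as the paper: an $O(1)$ bound for Algorithm 1, a vertex-counter/priority-queue analysis of the while loop in Algorithm 2 yielding $O(n\log n)$ via $O(n)$ queue operations at $O(\log n)$ each, and an $O(n)$ scan of unmarked vertices for Algorithm 3. Your explicit Euler-formula count $V=2n-4$, $E=3n-6$ and the push-when-the-counter-reaches-two queue discipline are minor refinements of the paper's heap bookkeeping (the paper keeps all vertices with key $\le 2$ in a max-heap), not a genuinely different argument.
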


\begin{proof}
	
	We now consider the complexity of Algorithm 2. In Algorithm 2, the time complexity of steps 1 to 5 is $O(1)$. Most of the computation time is spent on step 6. We use a priority queue with a heap structure whose nodes contain vertices as follows.
	
	For each tessellation vertex $v\in \mathcal{V}$, we define the key function $k(v)=c$, where $c$ is the number of planes constructed around vertex $v$. Firstly, let $k(v)=0$ for all $v\in \mathcal{V}$. During the course of processing, the key value increases and $0 \leq k(v) \leq 3$. At the end of step 3 of Algorithm 2, we have $k(v_{i,j,k})=3$ and for the three vertices $v_{i,j,a_1}, v_{i,a_2,k}$ and $v_{a_3,j,k}$ adjacent to $v_{i,j,k}$, $k(v_{i,j,a_1})=k(v_{i,a_2,k})=$ $k(v_{a_3,j,k})=2$. 
	
	We store vertices with key values less than or equal to 2 in the heap according to the key function so that the root node has the largest key value. For each repetition of step 6 in Algorithm 2, the root node is deleted from the heap, and the vertex contained within is set to $v_{p,q,l}$. After we construct the plane, we update the key function. Then we remove nodes which have a key value of 3. The other nodes whose key values have been changed are relocated in the heap. We repeat this process until all nodes have been removed from the heap. Adding and removing a node from the heap requires $O(\log n)$ time. Since the spherical tessellation is planar, the number of vertices is $O(n)$. Hence, the complexity of step 6 in Algorithm 2 is $O(n \log n)$. 
	
	After we obtain the planes $P_1,..., P_n$ from Algorithm 2, the construction of the polyhedron can be performed using the intersection of halfspaces, including the origin, which requires $O(n\log n)$ time.
	
	In Algorithm 3, we have to check unmarked vertices. For each vertex, we pick the planes of the corresponding polygons and find the intersection of three planes at the vertex which take the time complexity $O(1)$. Therefore, the complexity of Algorithm 3 is $O(n)$ through the unmarked vertices.
	
	Summarizing the above considerations, we have proved that the complexity of the spherical Laguerre Voronoi  diagram recognition is $O(n \log n)$.
	
\end{proof}

\subsection{Interpretation of the generators}

Once the tessellation $\mathcal{T}$ is judged as a spherical Laguerre Voronoi diagram, by Algorithm 1, 2, 3, we can construct the associated set of generators in the following way.

First, shrink the polyhedron constructed by Algorithm 2 so that all the planes intersect with $U$. Next, collect all the intersections, i.e., circles, between the planes and $U$. The resulting set of circles is the set of generators. Note that the generator set is not unique. The set of circles obtained by the above procedure is an example of the generator set.

\section{Concluding Remarks}

We proposed an algorithm for determining whether a spherical tessellation is a spherical Laguerre Voronoi diagram. The algorithm utilizes the convex polyhedron corresponding to a given tessellation. The criterion for recognizating a spherical Laguerre Voronoi diagram is obtained from this polyhedron. If the given tessellation is a spherical Laguerre Voronoi diagram, we can recover the generating spherical circles.

The properties of the polyhedron presented in the algorithm can be applied to the spherical Laguerre Voronoi approximation problem. If we have a spherical tessellation which does not correspond exactly with a spherical Laguerre Voronoi diagram, employing Algorithm 2 can give us an approximate spherical Laguerre Voronoi diagram. An interesting problem is the determination of the spherical Laguerre Voronoi diagram which provides the best approximation to the spherical tessellation.

From the perspective of practical applications, we \cite{Chaidee3} recently presented an algorithm for finding the spherical Laguerre Voronoi diagram which most closely approximates a tessellation, using an example of a planar tessellation extracted from a photo taken from a curved surface with generators. For the case that a given spherical tessellation does not contain generators, we may apply the proposed algorithms for approximating the generators and their weights. One of our future areas for study is the application of the algorithms presented here to the analysis of polygonal patterns found in the real world.

\subparagraph*{Acknowledgements}

The first author acknowledges the support of the MIMS Ph.D. Program of the Meiji Institute for the Advanced Study of Mathematical Sciences, Meiji University, and the Development and Promotion of Science and Technology Talents Project (DPST) of the Institute for the Promotion of Teaching Science and Technology (IPST), Ministry of Education, Thailand. The authors also thank the reviewers who gave useful comments for improving the manuscript. This research is supported in part by the Grant-in-Aid for Basic Research [24360039]; and Exploratory Research [151512067] of MEXT.



\section*{References}

\bibliography{ref}

\begin{thebibliography}{10}
\expandafter\ifx\csname url\endcsname\relax
  \def\url#1{\texttt{#1}}\fi
\expandafter\ifx\csname urlprefix\endcsname\relax\def\urlprefix{URL }\fi
\expandafter\ifx\csname href\endcsname\relax
  \def\href#1#2{#2} \def\path#1{#1}\fi

\bibitem{BAurenhammer}
F.~Aurenhammer, R.~Klein, D.~T. Lee, {Voronoi Diagrams and Delaunay
  Triangulations}, World Scientific Publishing Company, Singapore, 2013.

\bibitem{BOkabe}
A.~Okabe, B.~Boots, K.~Sugihara, S.~N. Chiu, {Spatial Tessellations: Concepts
  and Applications of Voronoi Diagrams}, 2nd Edition, Series in Probability and
  Statistics, John Wiley and Sons, Inc., 2000.

\bibitem{Imi}
H.~Imai, M.~Iri, K.~Murota, {Voronoi Diagram in the Laguerre Geometry and Its
  Applications}, SIAM J. Comput. 14~(1) (1985) 93--105.
\newblock \href {http://dx.doi.org/10.1137/0214006}
  {\path{doi:10.1137/0214006}}.

\bibitem{Aurenhammer}
F.~Aurenhammer, {Power Diagrams: Properties, Algorithms and Applications}, SIAM
  J. Comput. 16~(1) (1987) 78--96.
\newblock \href {http://dx.doi.org/10.1137/0216006}
  {\path{doi:10.1137/0216006}}.

\bibitem{Aurenhammer2}
F.~Aurenhammer, {A criterion for the affine equivalence of cell complexes inR d
  and convex polyhedra inR d+1}, Discrete. Comput. Geom. 2~(1) (1987) 49--64.
\newblock \href {http://dx.doi.org/10.1007/BF02187870}
  {\path{doi:10.1007/BF02187870}}.

\bibitem{Aurenhammer3}
F.~Aurenhammer, H.~Imai, {Geometric relations among Voronoi diagrams}, Geom.
  Ded. 27~(1) (1988) 65--75.
\newblock \href {http://dx.doi.org/10.1007/BF00181613}
  {\path{doi:10.1007/BF00181613}}.

\bibitem{Sugihara1}
K.~Sugihara, {Three-dimensional convex hull as a fruitful source of diagrams},
  Theor. Comput. Sci. 235 (2000) 325--337.
\newblock \href {http://dx.doi.org/10.1016/S0304-3975(99)00202-9}
  {\path{doi:10.1016/S0304-3975(99)00202-9}}.

\bibitem{Sugihara2}
K.~Sugihara, {Laguerre Voronoi diagram on the sphere}, J. Geom. Graph. 6~(1)
  (2002) 69--81.
\newblock \href {http://dx.doi.org/10.1.1.95.3444} {\path{doi:10.1.1.95.3444}}.

\bibitem{Mach}
P.~Mach, P.~Koehl, {An analytical method for computing atomic contact areas in
  biomolecules}, J. Comput. Chem. 34~(2) (2013) 105--120.
\newblock \href {http://dx.doi.org/10.1002/jcc.23111}
  {\path{doi:10.1002/jcc.23111}}.

\bibitem{Chaidee3}
S.~Chaidee, K.~Sugihara, {Fitting Spherical Laguerre Voronoi Diagrams to
  Real-World Tessellations Using Planar Photographic Images}, in press.

\bibitem{BLoeb}
A.~L. Loeb, {Space structures : their harmony and counterpoint / Arthur L. Loeb
  ; with a foreword by Cyril Stanley Smith}, Addison Wesley Pub. Co., Advanced
  Book Program Reading, Mass, 1976.

\bibitem{JAsh}
P.~F. Ash, E.~D. Bolker, {Generalized Dirichlet tessellations}, Geom. Ded.
  20~(2) (1986) 209--243.
\newblock \href {http://dx.doi.org/10.1007/BF00164401}
  {\path{doi:10.1007/BF00164401}}.

\bibitem{JAurenhammer}
F.~Aurenhammer, {Recognising polytopical cell complexes and constructing
  projection polyhedra}, J. Symbolic. Comput. 3~(3) (1987) 249--255.
\newblock \href {http://dx.doi.org/10.1016/S0747-7171(87)80003-2}
  {\path{doi:10.1016/S0747-7171(87)80003-2}}.

\bibitem{JHartvigsen}
D.~Hartvigsen, {Recognizing Voronoi Diagrams with Linear Programming}, ORSA. J.
  Comput. 4~(4) (1992) 369--374.
\newblock \href {http://dx.doi.org/10.1287/ijoc.4.4.369}
  {\path{doi:10.1287/ijoc.4.4.369}}.

\bibitem{JSchoenberg}
F.~Schoenberg, T.~Ferguson, C.~Li, {Inverting dirichlet tessellations}, Comput.
  J. 46~(1) (2003) 76--83.
\newblock \href {http://dx.doi.org/10.1093/comjnl/46.1.76}
  {\path{doi:10.1093/comjnl/46.1.76}}.

\bibitem{JAloupis}
G.~Aloupis, H.~P{\'{e}}rez-Ros{\'{e}}s, G.~Pineda-Villavicencio, P.~Taslakian,
  D.~Trinchet-Almaguer, {Fitting Voronoi diagrams to planar tesselations},
  LNCS, vol. 8288 8288 LNCS (2013) 349--361.
\newblock \href {http://arxiv.org/abs/1308.5550} {\path{arXiv:1308.5550}},
  \href {http://dx.doi.org/10.1007/978-3-642-45278-9_30}
  {\path{doi:10.1007/978-3-642-45278-9_30}}.

\bibitem{JHonda1}
H.~Honda, {Geometrical models for cells in tissues.}, Vol.~81, 1983.
\newblock \href {http://dx.doi.org/10.1016/S0074-7696(08)62339-6}
  {\path{doi:10.1016/S0074-7696(08)62339-6}}.

\bibitem{JHonda2}
H.~Honda, {Description of cellular patterns by Dirichlet domains: The
  two-dimensional case}, J. Theor. Biol. 72~(3) (1978) 523--543.
\newblock \href {http://dx.doi.org/10.1016/0022-5193(78)90315-6}
  {\path{doi:10.1016/0022-5193(78)90315-6}}.

\bibitem{JSuzuki}
A.~Suzuki, M.~Iri, {Approximation of a tessellation of the plane by a Voronoi
  diagram}, J. Oper. Res. Soc. Jpn. 29~(1) (1986) 69--97.

\bibitem{JEvan}
D.~G. Evans, S.~M. Jones, {Detecting Voronoi (area-of-influence) polygons},
  Math. Geol. 19~(6) (1987) 523--537.
\newblock \href {http://dx.doi.org/10.1007/BF00896918}
  {\path{doi:10.1007/BF00896918}}.

\bibitem{Chaidee1}
S.~Chaidee, K.~Sugihara, {Approximation of fruit skin patterns using spherical
  Voronoi diagrams}, Pattern. Anal. Appl.\href
  {http://dx.doi.org/10.1007/s10044-016-0534-2}
  {\path{doi:10.1007/s10044-016-0534-2}}.

\bibitem{Duan}
Q.~Duan, D.~P. Kroese, T.~Brereton, A.~Spettl, V.~Schmidt, {Inverting laguerre
  tessellations}, Comput. J. 57~(9) (2014) 1431--1440.
\newblock \href {http://dx.doi.org/10.1093/comjnl/bxu029}
  {\path{doi:10.1093/comjnl/bxu029}}.

\bibitem{Lau}
C.~Lautensack, {Fitting three-dimensional Laguerre tessellations to foam
  structures}, J. Appl. Stat. 35~(9) (2008) 985--995.
\newblock \href {http://dx.doi.org/10.1080/02664760802188112}
  {\path{doi:10.1080/02664760802188112}}.

\bibitem{Lyckegaard}
A.~Lyckegaard, E.~M. Lauridsen, W.~Ludwig, R.~W. Fonda, H.~F. Poulsen, {On the
  use of laguerre tessellations for representations of 3D grain structures},
  Adv. Eng. Mater. 13~(3) (2011) 165--170.
\newblock \href {http://dx.doi.org/10.1002/adem.201000258}
  {\path{doi:10.1002/adem.201000258}}.

\bibitem{Duan2}
A.~Spettl, T.~Brereton, Q.~Duan, T.~Werz, C.~E. Krill, D.~P. Kroese,
  V.~Schmidt, {Fitting Laguerre tessellation approximations to tomographic
  image data}, Philos. Mag. 96~(2) (2016) 166--189.
\newblock \href {http://arxiv.org/abs/1508.01341} {\path{arXiv:1508.01341}},
  \href {http://dx.doi.org/10.1080/14786435.2015.1125540}
  {\path{doi:10.1080/14786435.2015.1125540}}.

\bibitem{Renka}
R.~J. Renka, {Algorithm 772: STRIPACK: Delaunay triangulation and Voronoi
  diagram on the surface of a sphere}, ACM Trans. Math. Softw. 23~(3) (1997)
  416--434.
\newblock \href {http://dx.doi.org/10.1145/275323.275329}
  {\path{doi:10.1145/275323.275329}}.

\end{thebibliography}


\end{document}